\def\sumi{\sum_{i=1}^n} 
\def\sumj{\sum_{j=1}^J} 
\def\sumk{\sum_{k=1}^K}
\newtheorem{prop}{Proposition}
\newtheorem{rem}{Remark}
\def\etal{{\em et al. }}
\title{Enhanced Cluster Computing Performance \\through Proportional Fairness}
\author[1]{Thomas Bonald}
\author[2]{James Roberts\thanks{The authors are members of the LINCS, Paris, France. See www.lincs.fr.}}
\affil[1]{T\'el\'ecom ParisTech, Paris, France}
\affil[2]{IRT System-X, Paris-Saclay, France}
\date{}
\begin{document}
  \maketitle

\begin{abstract}

The performance of cluster computing depends on how concurrent jobs share multiple data center resource types like CPU, RAM and disk storage. Recent research has discussed efficiency and fairness requirements and identified a number of desirable scheduling objectives including so-called dominant resource fairness (DRF). We argue here that proportional fairness (PF), long recognized as a desirable objective in sharing network bandwidth between ongoing flows, is preferable to DRF. The superiority of PF is manifest under the realistic modelling assumption that the population of jobs in progress is a stochastic process. In random traffic the strategy-proof property of DRF proves unimportant while PF is shown by analysis and simulation to offer a significantly better efficiency-fairness tradeoff.

\end{abstract}

\section{Introduction}
We consider a cluster of servers, exclusively reserved for executing a certain class of jobs, as a set of distinct resource pools. The resources in question include CPU, RAM, disk space and I/O bandwidth, for instance. A job must execute a certain number of tasks, each task requiring a given quantity of each type of resource. An example of this type of computing environment and its workload are described by Reiss \etal \cite{Reiss2012b}.  
Each task has its own particular requirements profile specifying how much of each resource must be reserved: some tasks require more CPU than RAM, others require more RAM than I/O bandwidth, etc. The issue at hand is how a central scheduler should initiate and run tasks in parallel to fairly and efficiently share cluster resources between jobs. We argue in this paper that such resource sharing should realize the objective of Proportional Fairness (PF), that is, it should maximize the sum over jobs of logarithms of their number of tasks.

The multi-resource sharing problem was formulated in a recent paper by Ghodsi \etal \cite{Ghodsi2011} under the simplifying assumption that resource pools are homogeneous and infinitely divisible between jobs. All tasks of a given job are assumed to have identical profiles and, rather than allocating resources to discrete tasks, a central scheduler is supposed to assign shares of each resource in proportion to the task profile.  
Sharing objectives are expressed in terms of a number of required properties. In particular, it is required that allocations be strategy-proof  in the sense that the owner of a job cannot gain a bigger share by lying about its actual task profile. The authors of \cite{Ghodsi2011} show that this property is not satisfied by many otherwise intuitively appealing strategies. They proceed to define an original strategy-proof allocation called Dominant Resource Fairness (DRF) and explain how this can be realized.

Other authors have since come up with alternative strategies, based on different notions of fairness (e.g., \cite{Dolev2012, Zeldes2013}), or adapting DRF to account for more realistic cluster constraints (e.g., \cite{Parkes2012, Psomas2013, Wang2013}). Gutman and Nisan \cite{Gutman2012} situate proposed fairness objectives like DRF \cite{Ghodsi2011} and ``no justified complaints'' \cite{Dolev2012} in a common economics framework and provide efficient polynomial time algorithms.   

We believe the economics literature cited in \cite{Gutman2012} does not in fact constitute an appropriate background for modelling cluster resource sharing.  This is because it completely ignores the dynamics of job arrivals and completions that characterize cluster workload \cite{Reiss2012b}. These dynamics have an obvious impact on job completion times and interact closely with the applied allocation algorithm since a job finishes more or less quickly depending on the resources it is allocated. The notion of strategy-proofness  must be revisited to account for the impact of false user requirements on the process of jobs in progress and their expected performance. 

In advocating PF we draw on our understanding of bandwidth sharing performance in networks.   PF was defined by Kelly \etal \cite{Kelly1998} as the bandwidth allocation that maximizes the sum of logs of flow rates, arguably thus realizing greater social welfare than alternative allocations like max-min fairness.  We consider rather the impact of the sharing objective on flow completion times in dynamic traffic. In wired networks, it turns out that completion times are not highly dependent on the type of fairness imposed and DRF (equivalent here to max-min fairness) and PF have similar performance \cite{Bonald2006}. In wireless networks, on the other hand, the radio resource is measured in time slots per second rather than bit/s with the achievable bit rate per time slot depending significantly on user position. Sharing a channel equally in time, as realized in HDR/HSDPA systems for instance, actually corresponds to PF sharing in terms of bandwidth. In a mixed wired/wireless network, PF realizes a significantly more favourable efficiency--fairness tradeoff than max-min fairness \cite{Bonald2011}.

To compare PF and DRF in dynamic traffic we adopt a simple Markovian traffic model assuming Poisson job arrivals and exponential job sizes. This simplifies analysis and clarifies the respective tradeoffs realized by the two allocation approaches. Simulation and analytical results confirm that PF performs significantly better than DRF. Given known insensitivity properties of fair resource sharing, we are confident that PF would be equally preferable under a more realistic traffic model  \cite{Bonald2006}.

In addition to discussing ideal sharing of infinitely divisible resources, Ghodsi \etal show how to approximately realize DRF in practice accounting for discrete, finite-size tasks. The scheduler algorithm preferentially launches tasks of the ``most deprived'' job. This is the job whose current share of its dominant resource is smallest. PF can be implemented similarly on re-defining ``most deprived'' in terms of the ideal shares determined for that allocation objective. The complexities of PF and DRF algorithms are similar and hardly constitute an implementation issue since the number of resource types to be shared is typically very small.

In the next section we define DRF and PF and demonstrate their respective sharing properties with respect to a static job population. We consider dynamic sharing in Section \ref{sec:dynamicshare} assuming a fluid model where resources are infinitely divisible. The task-by-task implementations of DRF and PF are compared in Section \ref{sec:taskassign}. Finally, related work is discussed in Section  \ref{sec:related} before we conclude.

\section{Multi-resource sharing}
\label{sec:staticshares}
We consider how multiple resources should be shared assuming a static population of jobs, each with a particular profile of per-task requirements. We adopt a fluid model where pools of resources are assume infinitely divisible and compare two allocation strategies: dominant resource fairness (DRF) and proportional fairness (PF).

\subsection{The fluid model}
We consider $J$ infinitely divisible pools of resources of respective capacities $C_j$, for $j=1,\ldots,J$, to be shared by  $n$ jobs indexed by $i$. Each task of job $i$ requires $A_{ij}$ units of resource $j$. Denoting by $\varphi_i$ the number of ongoing tasks of job $i$, we have the capacity constraints:
\begin{equation*}
\sum_{i=1}^n \varphi_i A_{ij} \le C_j,
\end{equation*}
for $j=1,\ldots,J$. In the fluid model, we assume tasks are infinitesimally small and jobs can run a sufficiently large number of them to attain a given resource allocation. It then makes more sense to normalize resource capacities to 1 with $a_{ij}=A_{ij}/C_j$ representing fractional requirements. The (now) real numbers $\varphi_1,\ldots, \varphi_n$ are then considered as \emph{task volumes} satisfying capacity constraints:
\begin{equation}\label{eq:capa}
\sum_{i=1}^n \varphi_i a_{ij} \le 1,
\end{equation}
for $j=1,\ldots,J$. The product $\varphi_i a_{ij}$ is the fraction of resource $j$ allocated to $i$. We also write capacity constraints \eqref{eq:capa} in matrix form: $\varphi a \le 1$, where inequality is understood component-wise.

We say that resource $j$ is \emph{saturated} if the corresponding capacity constraint is
attained, i.e., if $\sum_{i} \varphi_i a_{ij} = 1$. Let $a_i$ denote the vector whose $j^{th}$ component  is $a_{ij}$. We say job $i$ \emph{needs} resource $j$ if $a_{ij} > 0$ and we assume each job needs at least one resource. The \emph{dominant resource} of
job $i$ is that for which the normalized requirement $a_{ij}$ is largest.

\subsection{Desirable properties}
\label{sec:properties}

The following are desirable properties of a multi-resource allocation algorithm, as discussed by  Ghodsi \etal  \cite{Ghodsi2011}: 

\noindent \textbf{Pareto-efficiency. }An algorithm is Pareto-efficient if each job needs some resource which is saturated.

\noindent \textbf{Sharing-incentive.} An algorithm satisfies sharing-incentive if each job gets no less than a $1/n$ share of its dominant resource.

\noindent \textbf{Strategy-proof.} An algorithm is strategy-proof if the resource shares of any job do not increase if this job modifies its  
vector of resource requirements.

It proves useful to introduce two additional properties:

\noindent \textbf{Scale-invariance.} An algorithm is scale-invariant if resource shares remain the same when the vector of resource requirements of any job is multiplied by a scaling factor.

\noindent \textbf{Local-fairness.} An algorithm is said to be locally fair if resource shares $\varphi_i a_{i1}$ are all equal to $1/n$ when $J=1$.

Pareto-efficiency means no resources are needlessly left idle. Sharing-incentive means
each job gets no less than it would under a strictly fair allocation where a $1/n$ share of
each resource is allocated to every job. Strategy-proof means no job can
get a better allocation by lying about its requirements. Scale-invariance means
resource shares do not depend on the size of tasks (e.g., a scaling
factor of 2 is equivalent to merging two tasks into one) and may be considered
as a weaker form of strategy-proofness.  Local-fairness means the allocation 
reverts to simple fair sharing when there is only one resource.

\begin{rem} \label{rem:localfair} It is straightforward to verify that local-fairness is realized by any Pareto-efficient algorithm that also satisfies any one of sharing-incentive, strategy-proof or scale-invariance. This means that if it is not locally fair, a Pareto efficient  algorithm has none of the other properties. 
\end{rem}

\subsection{Dominant resource fairness}

We first note that simple max-min fairness is unsatisfactory. The max-min allocation is determined by the standard water-filling procedure: the $\varphi_i$ are increased from zero together until some resource is saturated; the allocations of any job that needs this resource are frozen while the task volumes of the remainder increase together until another resource is saturated; the procedure continues until all job allocations are frozen.

The resulting allocation is uniquely characterized by the fact that, for each job $i$, there
is some saturated resource $j$ needed by job $i$ such that $\varphi_i$
is maximum over all jobs that need resource $j$. In particular, for $J=1$, we have equal task volumes, $\varphi_i = 1/(\sum_{k=1}^n a_{k1})$,
for $i=1, \ldots, n$. Resource shares $\varphi_i a_{i1}$ are thus proportional to requirements and max-min fairness is \emph{not} locally fair. Since the algorithm is Pareto-efficient, Remark \ref{rem:localfair} shows it has neither sharing-incentive, strategy-proof nor scale-invariant properties.
DRF is weighted max-min fairness with weights chosen to correct these failings.
Weights are inversely proportional to the dominant resource requirements:
$$ w_i = \frac{1}{\max_{j=1,\ldots,J} a_{ij}},$$
for $i=1,\ldots,n$. 

The allocation is derived from a modified water-filling procedure where filling rates are proportional to the per-job
weights. It is the unique Pareto-efficient allocation such that, for each
job $i$, there is some saturated resource $j$ needed by job $i$ such that $\varphi_i/w_i$
is maximum over all jobs that need resource $j$. For $J=1$, we get equal
resource shares, $\varphi_i  a_{i1} = 1/n,$ for $i=1, \ldots, n$, so that DRF is locally fair. It is proved in \cite{Ghodsi2011} that DRF also satisfies the sharing-incentive and strategy-proof  properties. For completeness, we provide short proofs for these and the scale-invariant properties.
 
\begin{prop}
DRF satisfies the sharing-incentive property.
\end{prop}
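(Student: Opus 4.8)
The plan is to reduce sharing-incentive to a uniform lower bound on the normalized task volumes. Recall that job $i$'s share of its dominant resource is $\varphi_i \max_{j} a_{ij}$, and since $w_i = 1/\max_j a_{ij}$ this quantity equals $\varphi_i/w_i$. Hence the sharing-incentive property is exactly the assertion that $\varphi_i/w_i \ge 1/n$ for every job $i$. The whole proof therefore amounts to bounding the smallest of these normalized levels from below by $1/n$.

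First I would pick a job $i^\star$ attaining the minimum level $\ell^\star = \min_i \varphi_i/w_i$. By the characterization of the DRF allocation, there is a saturated resource $j$ needed by $i^\star$ such that $\varphi_{i^\star}/w_{i^\star}$ is maximal among all jobs that need $j$. Combining this with the minimality of $\ell^\star$ forces every job $k$ with $a_{kj}>0$ to share the common level $\varphi_k/w_k = \ell^\star$, so that $\varphi_k = \ell^\star w_k$ on the support of column $j$.

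Next I would exploit saturation of resource $j$. Writing out $\sumi \varphi_i a_{ij} = 1$ and substituting $\varphi_k = \ell^\star w_k$ for the jobs needing $j$ gives $\ell^\star \sum_{k : a_{kj} > 0} w_k a_{kj} = 1$, i.e. $\ell^\star = 1/\sum_{k : a_{kj}>0} w_k a_{kj}$. The key inequality is that each term satisfies $w_k a_{kj} = a_{kj}/\max_{j'} a_{kj'} \le 1$, because the denominator is the dominant (largest) requirement of job $k$. Since at most $n$ jobs need resource $j$, the denominator is at most $n$, whence $\ell^\star \ge 1/n$; by choice of $i^\star$ this yields $\varphi_i/w_i \ge 1/n$ for all $i$.

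The argument is short, and its only real content is the observation $w_k a_{kj}\le 1$ together with the step that pins all jobs needing the binding resource to the minimal level. The one point requiring care — the main obstacle, such as it is — is justifying that the resource $j$ supplied by the DRF characterization for the minimal-level job is saturated purely by jobs sharing that minimal level, so the saturation identity can be rewritten entirely in terms of $\ell^\star$. An equivalent route, should one prefer to bypass the characterization, is to run the weighted water-filling directly and note that the minimum of $\varphi_i/w_i$ is the level $t^\star = 1/\max_j \sumi w_i a_{ij}$ reached at the first saturation event; the same bound $\sumi w_i a_{ij} \le n$ then gives $t^\star \ge 1/n$.
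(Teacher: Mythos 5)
Your proof is correct, and it uses the same raw ingredients as the paper's: the static characterization of DRF (for each job there is a saturated resource $j$ it needs on which $\varphi_i/w_i$ is maximal among the jobs needing $j$), the elementary inequality $w_k a_{kj}\le 1$, saturation of $j$, and the fact that at most $n$ jobs need it. The structural difference is that the paper never reduces to the minimizing job: it applies the characterization to an \emph{arbitrary} job $i$ and replaces your equality-of-levels step by the cruder ``maximum $\ge$ average'' bound, giving the one-line chain
$$
\frac{\varphi_i}{w_i}\;\ge\;\frac{1}{n}\sum_{k:a_{kj}>0}\frac{\varphi_k}{w_k}\;\ge\;\frac{1}{n}\sum_{k:a_{kj}>0}\varphi_k a_{kj}\;=\;\frac{1}{n},
$$
valid for every $i$ simultaneously. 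Your route costs one extra observation (that all jobs needing the bottleneck of the minimal-level job sit exactly at level $\ell^\star$) but buys a sharper conclusion: the exact identity $\ell^\star = 1/\sum_{k:a_{kj}>0} w_k a_{kj}$ for the worst-case dominant share, rather than only the bound $1/n$. Note also that the point you flag as the ``main obstacle'' is in fact automatic: jobs with $a_{kj}=0$ contribute nothing to the saturation sum, so once the needers are pinned at $\ell^\star$ the identity follows with nothing left to check. Your closing water-filling variant ($t^\star = 1/\max_j \sum_{i} w_i a_{ij}$ at the first saturation event, then $\sum_i w_i a_{ij}\le n$) is also valid; it is closest in spirit to the constructive definition of DRF, whereas both your main argument and the paper's lean on the static characterization.
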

\begin{proof}
For each job $i$, there is some saturated resource $j$ needed by job $i$ such
that $\varphi_i/w_i$ is maximum over all jobs that need resource $j$. In particular,
$$
\varphi_i\max_{l=1,\ldots,J} a_{il}=\frac{\varphi_i}{w_i}\ge \frac{1}{n}\sum_{k:a_{kj}>0} \frac{\varphi_k }{ w_k} \ge \frac{1}{n}\sum_{k:a_{kj}>0}  \varphi_k a_{kj} =\frac{1}{n}.
$$
\end{proof}

\begin{prop}
DRF is strategy-proof.
\end{prop}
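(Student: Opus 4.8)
The plan is to show that job $i$ cannot increase the amount of its own work it actually completes by misreporting its profile while the other jobs report truthfully. The first point to get right is that the scheduler allocates according to the \emph{reported} profile $\tilde a_i$: the characterization assigns job $i$ a volume $\tilde\varphi_i$ but hands it only $\tilde\varphi_i\tilde a_{ij}$ of each resource $j$, so the number of genuine tasks it can run is $\psi=\tilde\varphi_i\min_{j:\,a_{ij}>0}(\tilde a_{ij}/a_{ij})$, and the statement reduces to $\psi\le\varphi_i$ with $\varphi_i$ the truthful volume. It would be a mistake to compare $\tilde\varphi_i$ with $\varphi_i$ directly: under-reporting a non-dominant resource can push $\tilde\varphi_i$ above $\varphi_i$, but the scheduler then starves job $i$ of that resource, the min-ratio collapses, and $\psi$ drops back. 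Before anything else I would use scale-invariance --- which is immediate from the characterization, since replacing $\tilde a_i$ by $c\,\tilde a_i$ sends $\tilde w_i$ to $\tilde w_i/c$ and $\tilde\varphi_i$ to $\tilde\varphi_i/c$ while leaving every product $\tilde\varphi_k\tilde a_{kj}$ unchanged --- to normalize $\max_j\tilde a_{ij}=\max_j a_{ij}$, so that $\tilde w_i=w_i$ and the normalized shares $\varphi_i/w_i$ and $\tilde\varphi_i/w_i$ are measured on the same scale.

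The main tool is the bottleneck characterization of DRF: in any DRF allocation each job $k$ has a saturated resource it needs at which $\varphi_k/w_k$ is maximal among the users of that resource. Applied to job $i$ in the truthful allocation it gives a saturated resource $j^{*}$ with $a_{ij^{*}}>0$ and $\varphi_k/w_k\le\varphi_i/w_i$ for every $k$ needing $j^{*}$. I would then introduce the comparison allocation $\varphi'$ that keeps the liar's volumes $\tilde\varphi_k$ for all $k\ne i$ and sets $\varphi'_i=\psi$. Because $\psi\,a_{ij}\le\tilde\varphi_i\tilde a_{ij}$ for every $j$, feasibility of the liar's allocation immediately yields feasibility of $\varphi'$ for the \emph{true} demand matrix; thus $\varphi'$ is a genuine competitor to the truthful DRF allocation in which job $i$ runs $\psi$ tasks, and everything reduces to showing no such competitor can have $\psi>\varphi_i$.

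To finish I would assume $\psi>\varphi_i$ and run a bottleneck descent. At $j^{*}$ the saturation constraint for $\varphi'$ reads $\sum_{k:a_{kj^{*}}>0}\varphi'_k a_{kj^{*}}\le 1=\sum_{k:a_{kj^{*}}>0}\varphi_k a_{kj^{*}}$; since $\varphi'_i a_{ij^{*}}>\varphi_i a_{ij^{*}}$, some user $k_0$ needing $j^{*}$ is squeezed, $\tilde\varphi_{k_0}<\varphi_{k_0}$ with $\varphi_{k_0}/w_{k_0}\le\varphi_i/w_i$. Applying the characterization to $k_0$ in the liar's allocation gives a saturated resource it dominates there; comparing that resource's consumption in the two allocations forces either a further squeezed user or a boosted user, each sitting at a strictly smaller truthful normalized level, so by finiteness the chain of distinct users must terminate in a contradiction with saturation and the uniqueness of the weighted max-min (leximin in $\varphi_k/w_k$) allocation. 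The main obstacle is exactly this descent: the scale reduction and the feasibility of $\varphi'$ are routine, but turning the repeated ``someone must be squeezed'' observation into a terminating contradiction requires tracking volumes and normalized shares together, and in particular handling the steps where it is job $i$'s own reported consumption that rises at a saturated resource --- this is where the min-ratio definition of $\psi$ must be fed back in to bound the gain.
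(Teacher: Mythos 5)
Your route is genuinely different from the paper's. The paper's proof is dynamic: it couples the two water-filling processes, running them with identical filling rates for all jobs other than $i$, and splits into two cases according to which copy of job $i$ is frozen first. Each case then gives a one-line bound on the liar's share of its \emph{truthful} dominant resource $j$: either $\varphi'_i a'_{ij}\le \varphi'_i/w'_i\le \varphi_i/w_i=\varphi_i a_{ij}$ (liar frozen first), or $\varphi'_i a'_{ij}\le 1-\sum_{k\ne i}\varphi'_k a_{kj}\le 1-\sum_{k\ne i}\varphi_k a_{kj}=\varphi_i a_{ij}$ (truthful job frozen first, all other users of $j$ better off in the modified system). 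Your plan never touches the water-filling dynamics: it works from the static bottleneck characterization applied in \emph{both} systems, plus an exchange chain. It also bounds a slightly different (and operationally the right) quantity, the usable task volume $\psi=\tilde\varphi_i\min_{j:a_{ij}>0}(\tilde a_{ij}/a_{ij})$, rather than the dominant-resource share; the two are consistent, since the paper's bound gives $\psi\le\varphi'_i a'_{ij}/a_{ij}\le\varphi_i$ at the truthful dominant resource. What the paper buys is brevity: the coupling eliminates all interaction between jobs in two cases. What yours buys is that it uses only the characterization of DRF, not the procedure that computes it, and it makes explicit why comparing $\tilde\varphi_i$ to $\varphi_i$ is the wrong target.

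There is, however, a genuine gap at the decisive step: the descent is only sketched, and the termination mechanism you name (``uniqueness of the weighted max-min/leximin allocation'') is not what actually ends the chain --- what ends it is the chain reaching job $i$. The bookkeeping that closes it: compare the truthful allocation $\varphi$, the reported allocation $\tilde\varphi$, and your hybrid $\varphi'$ (with $\varphi'_i=\psi$), assume $\psi>\varphi_i$, and alternate two moves, seeded at job $i$'s truthful bottleneck $j^*$, where hybrid consumption would exceed truthful consumption unless some user $k$ with $a_{kj^*}>0$ is squeezed ($\tilde\varphi_k<\varphi_k$), necessarily at truthful level $\varphi_k/w_k\le\varphi_i/w_i$. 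Move (i): a squeezed user $k$ is inspected at its \emph{reported-system} bottleneck $j$, where reported consumption equals $1$ and truthful consumption is at most $1$; since $k$'s term fell, some term rose --- either a boosted user $k'$ ($\tilde\varphi_{k'}>\varphi_{k'}$) with $\varphi_{k'}/w_{k'}<\tilde\varphi_{k'}/w_{k'}\le\tilde\varphi_k/w_k<\varphi_k/w_k$, or job $i$ itself, in which case $\tilde a_{ij}>0$ and the bottleneck property gives $\tilde\varphi_i/w_i\le\tilde\varphi_k/w_k<\varphi_i/w_i$, contradicting $\varphi_i<\psi\le\tilde\varphi_i$ (this is where your normalization enters). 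Move (ii): a boosted user $k'$ is inspected at its \emph{truthful} bottleneck $j'$, where truthful consumption equals $1$ and hybrid consumption is at most $1$; since $k'$'s term rose, some term fell --- either a squeezed user at truthful level at most $\varphi_{k'}/w_{k'}$, or job $i$, in which case $\psi a_{ij'}<\varphi_i a_{ij'}$ with $a_{ij'}>0$, contradicting $\psi>\varphi_i$ outright. Truthful levels drop strictly across every move (i) and weakly across every move (ii), and squeezed and boosted users are disjoint by definition, so no user ever recurs; by finiteness the chain must hit job $i$, and both hits are contradictions. With this spelled out your argument is correct and complete; as submitted, the plan identifies all the ingredients but leaves the terminating contradiction --- the heart of the proof --- unproved.
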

\begin{proof}
Assume that job $i$ claims resource requirements $a'_i$ instead of $a_i$. Denote
by $w'_i$ the corresponding weight and by $\varphi'_i$
the resulting allocation. We compare
the water-filling processes of both systems with the same filling rates for jobs
other than $i$. If job $i$ is limited in the modified system first, then $\varphi'_i /w'_i\le \varphi_i /w_i$ 
so that the dominant resource $j$ of job $i$ in the original system satisfies:
$$
\varphi'_i a'_{ij}\le \frac{ \varphi'_i }{ w'_{i}} \le  \frac{ \varphi_i }{ w_{i}}=\varphi_i a_{ij}.
$$
If, on the other hand, job $i$ in the original system is limited first on resource $j$, then all jobs except $i$ that need
this resource get a better allocation in the modified system. It follows that:
$$
\varphi'_i a'_{ij}\le 1-\sum_{k \ne i} \varphi'_k a_{kj}\le 1-  \sum_{k \ne i} \varphi_k a_{kj} = \varphi_i a_{ij}.
$$
\end{proof}

\begin{prop}
DRF is scale-invariant.
\end{prop}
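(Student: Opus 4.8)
The plan is to exploit the unique characterization of the DRF allocation together with the explicit form of the weights $w_i = 1/\max_{j} a_{ij}$. I would fix a single job $i$ and a scaling factor $\lambda > 0$, replace its requirement vector $a_i$ by $a'_i = \lambda a_i$ while leaving all other jobs unchanged, and show that the DRF allocation of the modified system differs from that of the original only in the coordinate of job $i$, in a way that leaves every resource share $\varphi_i a_{ij}$ intact.

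First I would record how the scaling affects the weight: since $\max_{j} a'_{ij} = \lambda \max_{j} a_{ij}$, the new weight is $w'_i = w_i/\lambda$, while $w'_k = w_k$ for $k \ne i$. I would then propose the candidate allocation $\varphi'_i = \varphi_i/\lambda$ and $\varphi'_k = \varphi_k$ for $k \ne i$, and verify that it is the DRF allocation of the modified system by checking the two ingredients of the characterization.

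For the capacity constraints, the resource usage of job $i$ is unchanged, $\varphi'_i a'_{ij} = (\varphi_i/\lambda)(\lambda a_{ij}) = \varphi_i a_{ij}$, so every column sum $\sum_{k} \varphi_k a_{kj}$ is preserved and exactly the same resources are saturated. For the optimality condition, the ratio for job $i$ is also preserved, $\varphi'_i/w'_i = (\varphi_i/\lambda)/(w_i/\lambda) = \varphi_i/w_i$, and the ratios of the other jobs are untouched. Since $\lambda > 0$, the set of resources that job $i$ needs (those with $a_{ij} > 0$) is unchanged as well. Hence for each job there is still some saturated resource it needs on which its ratio $\varphi/w$ is maximal, and the allocation remains Pareto-efficient.

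The final step is to invoke uniqueness of the DRF allocation: the candidate satisfies all the defining conditions of the modified system, so it must be the DRF allocation there, giving $\varphi'_i = \varphi_i/\lambda$. The resource shares then satisfy $\varphi'_i a'_{ij} = \varphi_i a_{ij}$ and $\varphi'_k a'_{kj} = \varphi_k a_{kj}$, which is exactly scale-invariance. There is no serious obstacle here beyond guessing the correct compensating allocation; the one point needing care is that the perfect cancellation works precisely because the weight scales by exactly $1/\lambda$, which is what makes both the product $\varphi_i a_{ij}$ and the ratio $\varphi_i/w_i$ simultaneously invariant.
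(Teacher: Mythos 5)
Your proof is correct and follows essentially the same route as the paper's: propose the compensating allocation $\varphi'_i = \varphi_i/\lambda$, observe that the weight scales as $w'_i = w_i/\lambda$ so both the resource shares and the ratios $\varphi_i/w_i$ are preserved, and conclude by the uniqueness of the DRF characterization. The only cosmetic difference is that the paper scales every job simultaneously (each by its own factor $\alpha_i$) while you scale one job at a time, which is an equivalent formulation of the same argument.
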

\begin{proof}
Assume that each job $i$ claims resource requirements $a'_i=\alpha_i  a_i$
instead of $a_i$, for some $\alpha_i > 0$. Let $w'_i=w_i/\alpha_i$   and 
$\varphi'_i=\varphi_i/\alpha_i$. The allocation $\varphi'$
satisfies the new capacity constraints $\varphi' a' \le 1$ and gives the same resource
shares as the original allocation $\varphi$. It remains to show that $\varphi'$ is DRF. We know that, for each job $i$, there exists
some saturated resource $j$ such that $\varphi_i/w_i$ is maximum over all jobs that need
resource $j$; since   $\varphi'_i/w'_i=\varphi_i/w_i$, this property is also satisfied by the allocation
$\varphi'$ which is therefore DRF.
\end{proof}

\begin{rem} \label{rem:bias}
Note that DRF also has the following somewhat surprising characteristic: the allocation may depend on resource requirements for non-saturated resources. Consider $J=2$ resources, $n=2$ jobs and $a_1=( 1, 1/3) $ and $a_2=( 1/2, 1) $. Then $\varphi=(2/3, 2/3)$ and resource 2 is not saturated. 
In the absence of resource 2 constraints, that is for $a_1=( 1, 0) $ and $a_2=( 1/2, 0) $, the allocation changes to $\varphi'=(1,2)$ corresponding to equal shares of resource 1. The allocation depends on requirements for resource 2 even though this is not a bottleneck. 
\end{rem}

\subsection{Proportional fairness}
\label{sec:pfstatic}
An allocation that is well-known to have excellent properties in the context of sharing network bandwidth is proportional fairness (PF) \cite{Kelly1998}.  The PF allocation  is defined by:
\begin{equation}\label{eq:pf}
\arg \max_\varphi \sumi \log\varphi_i,
\end{equation}
under capacity constraints (\ref{eq:capa}). By construction, it is Pareto-efficient.

Let $\nu$ be the vector of Lagrange multipliers associated with capacity constraints \eqref{eq:capa}. The Karuch-Kuhn-Tucker theorem implies that PF is the unique allocation $\varphi$ such that:
\begin{equation}\label{eq:kkt}
\frac{1}{ \varphi_i}=\sumj a_{ij}\nu_j,
\end{equation}
for $i=1,\ldots,n$, where 
\begin{equation}\label{eq:kkt2}
\nu_j\ge0 \quad \text{and}\quad \nu_j \left(\sumi \varphi_i a_{ij}-1\right)=0,
\end{equation}
for $j=1,\ldots,J$.  

From \eqref{eq:kkt} and \eqref{eq:kkt2} we deduce,
\begin{equation}\label{eq:kkt3}
\sumj \nu_j= \sumj \nu_j \left(\sumi \varphi_i  a_{ij} \right)=\sumi \varphi_i  \left( \sumj  a_{ij} \nu_j \right)= n.
\end{equation}
It follows that PF is locally fair since, for $J=1$, $\varphi_i a_{i1}=1/n$.

\begin{prop}
PF satisfies the sharing-incentive property.
\end{prop}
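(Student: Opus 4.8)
The plan is to read the sharing-incentive inequality $\varphi_i \max_{j} a_{ij} \ge 1/n$ directly off the KKT characterization, exploiting the normalization of the Lagrange multipliers already established in \eqref{eq:kkt3}. First I would fix a job $i$ and write $a_i^\star = \max_{j=1,\ldots,J} a_{ij}$ for its dominant requirement, which is well defined and strictly positive since every job needs at least one resource. The stationarity condition \eqref{eq:kkt} expresses $1/\varphi_i$ as the weighted sum $\sumj a_{ij}\nu_j$ of the multipliers, so the whole argument amounts to bounding this sum.

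The key step is a one-line majorization. Because every multiplier $\nu_j$ is nonnegative by \eqref{eq:kkt2}, I can replace each coefficient $a_{ij}$ by the dominant value $a_i^\star$ without decreasing the sum, obtaining $1/\varphi_i = \sumj a_{ij}\nu_j \le a_i^\star \sumj \nu_j$. Invoking \eqref{eq:kkt3}, which gives $\sumj \nu_j = n$, this becomes $1/\varphi_i \le n\,a_i^\star$, and rearranging yields $\varphi_i a_i^\star \ge 1/n$, precisely the sharing-incentive property.

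I expect no serious obstacle here: the substantive work, namely deriving the multiplier normalization $\sumj \nu_j = n$, has already been carried out in \eqref{eq:kkt3} as a consequence of the KKT conditions. The only point that genuinely requires care is the sign of the multipliers, since it is exactly the nonnegativity $\nu_j \ge 0$ that makes the majorization run in the desired direction; were the multipliers allowed to be negative the bound would fail. Beyond that, the argument is purely a substitution of the dominant coefficient followed by the normalization, so I do not anticipate any technical difficulty.
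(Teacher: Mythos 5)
Your proof is correct and is essentially the paper's own argument: the paper's one-line display $\varphi_i \max_j a_{ij} = \max_j a_{ij}/\sumj a_{ij}\nu_j \ge 1/\sumj \nu_j = 1/n$ is exactly your majorization of $\sumj a_{ij}\nu_j$ by $(\max_j a_{ij})\sumj \nu_j$ using $\nu_j \ge 0$, combined with the normalization \eqref{eq:kkt3}. Nothing further is needed.
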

\begin{proof}
It follows from  \eqref{eq:kkt} and \eqref{eq:kkt3} that:
$$
\varphi_i \max_j a_{ij}=\frac {\max_j a_{ij}}{\sumj a_{ij}\nu_j}\ge \frac {1}{\sumj \nu_j}=\frac{1}{ n}.
$$
\end{proof}

\begin{prop}\label{prop:pfproof}
PF is \emph{not} strategy-proof.
\end{prop}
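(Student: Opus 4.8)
The plan is to disprove the property by exhibiting a single explicit counterexample, since strategy-proofness is a universally quantified statement and one violating instance suffices. I will work with the smallest nontrivial case, $J=2$ resources and $n=2$ jobs, and exploit the structure already at hand: by the KKT characterization \eqref{eq:kkt}, rescaling a job's requirement vector $a_i\mapsto\alpha a_i$ merely rescales $\varphi_i\mapsto\varphi_i/\alpha$ and leaves every share $\varphi_i a_{ij}$ unchanged, so a profitable misreport must alter the \emph{direction} of the requirement vector, not just its scale. This tells me to look for a job that distorts the ratio between its two requirements.

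Concretely, I will take the symmetric profiles $a_1=(4,1)$ and $a_2=(1,4)$. By symmetry the truthful PF allocation is $\varphi=(1/5,1/5)$ with both resources saturated (easily confirmed from \eqref{eq:kkt}--\eqref{eq:kkt3} with equal multipliers $\nu_1=\nu_2=1$), giving job $1$ the shares $(\varphi_1 a_{11},\varphi_1 a_{12})=(4/5,1/5)$. I then let job $1$ over-state its lesser requirement, reporting $a_1'=(4,2)$. Solving the PF problem for the reported profiles under the assumption that both resources remain saturated yields $\varphi_1'=3/14$, and hence reported shares $(4\varphi_1',2\varphi_1')=(6/7,3/7)$ for job $1$. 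Both strictly exceed the truthful shares $(4/5,1/5)$; moreover, since job $1$ truly needs the resources in ratio $4{:}1$, the amount reserved for it lets it run $\min(6/7\div 4,\ 3/7\div 1)=3/14>1/5$ real tasks, so the manipulation is genuinely beneficial and strategy-proofness fails.

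The step requiring the most care is verifying that the recomputed allocation is really the PF optimum for the reported profiles, rather than an artefact of the assumed active set. I will check that the associated multipliers obtained from \eqref{eq:kkt} and \eqref{eq:kkt3}, namely $\nu_1=1/3$ and $\nu_2=5/3$, are both nonnegative, which certifies via the KKT conditions that both resources are indeed saturated at the optimum (uniqueness following from strict concavity of the objective over the convex feasible set). This nonnegativity is exactly what constrains the choice of example: for requirement vectors that are only mildly skewed the analogous computation returns a negative multiplier, signalling that the supposedly saturated resource is in fact slack, so I deliberately take strongly skewed profiles ($4{:}1$) to keep the two-saturated-resource regime valid. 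With the optimum thus certified, the strict inequalities $6/7>4/5$ and $3/7>1/5$ complete the argument.
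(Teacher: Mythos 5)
Your proof is correct and follows essentially the same approach as the paper: an explicit $n=2$, $J=2$ counterexample in which one job inflates its smaller (non-dominant) requirement, with the resulting PF allocations certified via the KKT conditions \eqref{eq:kkt}--\eqref{eq:kkt3}. The only difference is the choice of numbers (the paper uses $a_1=(1/2,1)$, $a_2=(1,1/2)$ with the lie $a_1'=(2/3,1)$, which yields a degenerate multiplier $\nu_2'=0$, whereas your $4{:}1$ profiles keep both multipliers strictly positive), which is immaterial to the argument.
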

\begin{proof}
It is sufficient to show this with an example. Consider $n=2$ jobs and $J=2$ resources with $a_1=( 1/2, 1) $ and $a_2=( 1, 1/2) $. We then have $\nu=(1, 1)$ and $\varphi=(2/3, 2/3)$. If job 1 claims $2/3$  units of resource 1 per task instead of $1/2$, we have 
$a'_1=( 2/3, 1) $ and $a'_2=( 1, 1/2) $  yielding $\nu'=(2, 0)$ and $\varphi'=(3/4, 1/2)$. Job 1 receives a bigger share at the expense of job 2.
\end{proof}

\begin{rem} 
Unlike max-min fairness, it is not sufficient for a job to increase its resource requirements to increase its resource shares. 
Taking  $a'_1=( 1, 1) $ and $a'_2=( 1,1/2) $ in the above example yields $\nu'=(2, 0)$ and $\varphi'=(1/2, 1/2)$.  Both job 1 and job 2 receive smaller resource shares. 
\end{rem}

\begin{rem} A job can be sure to increase its resource shares only if it knows the resource requirements of the other jobs. Taking  $a_1=( 1/2, 1) $ and $a_2=( 1,1/3) $   yields 
$\nu=(1, 1)$ and $\varphi=(4/5, 3/5)$. If job 1 applies the same strategy as in the  example of Proposition \ref{prop:pfproof}, we have $a'_1=( 2/3,1) $ and $a'_2=( 1,1/3) $ yielding $\nu'=(2,0)$ and $\varphi'=(3/4, 1/2)$. Both jobs again receive fewer resources. 
\end{rem}

The above remarks suggest a user wishing to exploit PF would be hard-pressed to determine an optimal strategy. The impact of false resource claims can only be appraised by considering likely combinations of jobs in progress, as determined by a model of dynamic traffic.  The following proposition shows that PF does have the scale-invariance guarantee.

\begin{prop}
PF is scale-invariant.
\end{prop}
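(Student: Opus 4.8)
The plan is to mirror exactly the argument used for DRF: exhibit an explicit candidate allocation for the rescaled problem and verify that it satisfies the KKT characterization \eqref{eq:kkt}--\eqref{eq:kkt2}, which by uniqueness forces it to be the PF allocation and hence pins down the resource shares.

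Concretely, suppose each job $i$ replaces its requirement vector $a_i$ by $a'_i = \alpha_i a_i$ for some $\alpha_i > 0$, and let $\varphi$ and $\nu$ be the PF allocation and associated multipliers for the original requirements. I would propose the candidate $\varphi'_i = \varphi_i/\alpha_i$ together with the unchanged multipliers $\nu' = \nu$. The first thing to record is that resource shares are preserved: $\varphi'_i a'_{ij} = (\varphi_i/\alpha_i)(\alpha_i a_{ij}) = \varphi_i a_{ij}$, so once $\varphi'$ is shown to be the PF allocation the conclusion is immediate. The same computation gives $\sum_i \varphi'_i a'_{ij} = \sum_i \varphi_i a_{ij}$, so $\varphi'$ satisfies the capacity constraints \eqref{eq:capa}, and a resource is saturated under $\varphi'$ precisely when it was saturated under $\varphi$.

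Next I would verify the two KKT conditions for the rescaled problem. Stationarity holds because $1/\varphi'_i = \alpha_i/\varphi_i = \alpha_i \sum_j a_{ij}\nu_j = \sum_j a'_{ij}\nu'_j$, which is exactly \eqref{eq:kkt}. Complementary slackness \eqref{eq:kkt2} is inherited directly, since $\nu'_j = \nu_j \ge 0$ and $\nu'_j\bigl(\sum_i \varphi'_i a'_{ij} - 1\bigr) = \nu_j\bigl(\sum_i \varphi_i a_{ij} - 1\bigr) = 0$. As the KKT conditions uniquely characterize the PF allocation, $\varphi'$ must be the PF allocation for the requirements $a'$, and its resource shares coincide with those under $\varphi$.

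I do not expect a genuine obstacle here: the construction is essentially forced, since scaling $a_i$ by $\alpha_i$ must be compensated by scaling $\varphi_i$ by $1/\alpha_i$ to keep the product $\varphi_i a_{ij}$ fixed. The only point deserving care is the appeal to uniqueness. For this I would lean on the fact, already used just after \eqref{eq:kkt}, that the strictly concave objective \eqref{eq:pf} over the linear feasible set \eqref{eq:capa} makes the KKT conditions both necessary and sufficient, so exhibiting a single feasible pair $(\varphi',\nu')$ satisfying them suffices to identify the optimizer.
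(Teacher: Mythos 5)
Your proof is correct and follows exactly the paper's approach: define $\varphi'_i=\varphi_i/\alpha_i$, note the resource shares are unchanged, and verify via the KKT conditions \eqref{eq:kkt}--\eqref{eq:kkt2} that $\varphi'$ is the PF allocation for the scaled requirements. You merely spell out the verification that the paper leaves as ``easily verified,'' including the correct observation that the multipliers $\nu$ can be kept unchanged.
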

\begin{proof}
Assume that each job $i$ claims resource requirements $a'_{i}=\alpha_i a_i$  instead of $a_{i}$, for some $\alpha_i>0$. Let  $\varphi'_i=\varphi_i/\alpha_i$. 
The allocation $\varphi'$   satisfies the new capacity constraints $\varphi' a' \le 1$ and gives the same resource shares as the original allocation $\varphi$. 
Moreover, from \eqref{eq:kkt} and \eqref{eq:kkt2}, it can easily be verified that $\varphi'$ is the PF allocation under capacity constraints $\varphi' a' \le 1$. 
\end{proof}

Unlike DRF, PF is not biased by non-bottlenecked resources (see Remark  \ref{rem:bias}) since these clearly have no impact on the allocation.

\subsection{Utility maximization}
PF sharing maximizes overall utility under the assumption that the utility for each job $i$ of allocation $\varphi_i$ is proportional to $\log \varphi_i$.  One might consider allocations defined in terms of more general, differentiable and strictly concave utility functions $U(\cdot)$, i.e., \begin{equation}\label{eq:ut}
\varphi = \arg \max_{\varphi} \sumi U(\varphi_i),
\end{equation}
 under capacity constraints \eqref{eq:capa}. 
 
 For instance, $U=(\cdot)^\alpha/(1-\alpha)$, for $\alpha > 0$, defines the family of $\alpha$-fair allocations \cite{Mo2000}. Proportional fairness corresponds to the limit $\alpha \to 1$ while max-min fairness is realized in the limit $\alpha\to +\infty$. 

Let $\nu$ be the vector of Lagrange multipliers associated with capacity constraints \eqref{eq:capa}. The Karuch-Kuhn-Tucker theorem states that a utility-based  allocation is the unique allocation $\varphi$ such that:
\begin{equation}\label{eq:kktu}
\forall i=1,\ldots,n,\quad  {U'(\varphi_i)}=\sumj a_{ij}\nu_j,
\end{equation}
under constraints \eqref{eq:kkt2}. Since utility maximization allocations are necessarily Pareto-efficient, the following property shows that no utility-based allocation except PF can be either sharing-incentive, strategy-proof or scale-invariant (cf. Remark \ref{rem:localfair}). 

\begin{prop}\label{prop:usharing}
PF is the only locally fair utility-based allocation.
\end{prop}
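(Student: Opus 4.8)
The plan is to exploit the fact that local-fairness is, by definition, a statement about the single-resource case $J=1$, so I would restrict attention to that case throughout. Fix an arbitrary number of jobs $n\ge 2$ and arbitrary positive requirements $a_{11},\ldots,a_{n1}$. Since a utility-based allocation is Pareto-efficient, the single resource is saturated, and the KKT stationarity condition \eqref{eq:kktu} reduces to $U'(\varphi_i)=a_{i1}\nu_1$ for each $i$, with a single multiplier $\nu_1\ge 0$.

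Next I would impose local-fairness, which forces $\varphi_i a_{i1}=1/n$, hence $\varphi_i=1/(n a_{i1})$ for every $i$. Substituting into the stationarity condition gives
\[
U'\!\left(\frac{1}{n a_{i1}}\right)=a_{i1}\,\nu_1 .
\]
The key step is then to eliminate the unknown multiplier $\nu_1$ by comparing two jobs $i$ and $k$: dividing the two relations yields $a_{i1}\,U'(1/(n a_{i1}))=a_{k1}\,U'(1/(n a_{k1}))$. Writing $s=1/(n a_{i1})$ and $u=1/(n a_{k1})$, this is precisely $s\,U'(s)=u\,U'(u)$.

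Because $n\ge 2$ and the requirements may be chosen as arbitrary positive reals, $s$ and $u$ range independently over all of $(0,\infty)$. I would conclude that $t\,U'(t)$ is constant on $(0,\infty)$, say $t\,U'(t)=c$; this is the crux of the argument, since local-fairness pins down the logarithmic derivative of $U$. Integrating gives $U(t)=c\log t+d$, and strict concavity forces $c>0$. As the maximizer in \eqref{eq:ut} is unchanged under a positive affine transformation of $U$, this allocation coincides with the one defined by \eqref{eq:pf}, i.e.\ PF. Combined with the local-fairness of PF already deduced from \eqref{eq:kkt3}, this establishes that PF is the unique locally fair utility-based allocation.

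The main obstacle I anticipate is one of care rather than difficulty: I must justify that the functional equation $s\,U'(s)=u\,U'(u)$ genuinely holds for \emph{all} pairs $(s,u)\in(0,\infty)^2$, which relies on the freedom to realize any positive requirement $a_{i1}$ and hence any positive value $\varphi_i=1/(n a_{i1})$; and I must remember to invoke the affine-invariance of utility maximization in order to identify $U(t)=c\log t+d$ with the pure logarithm, rather than concluding that $U$ is literally $\log$.
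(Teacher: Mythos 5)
Your proposal is correct and follows essentially the same route as the paper's proof: restrict to $J=1$, combine the stationarity condition \eqref{eq:kktu} with the equal-share requirement $\varphi_i a_{i1}=1/n$ to force $t\,U'(t)$ to be constant, and integrate to obtain $U=\alpha\log(\cdot)+\beta$ (the paper runs exactly this argument with $n=2$ and requirements $\tfrac{1}{2}$ and $\tfrac{1}{2x}$, so yours is just a mildly more general instance). One slip to fix: eliminating $\nu_1$ from the two stationarity relations gives $U'\!\left(1/(n a_{i1})\right)/a_{i1}=U'\!\left(1/(n a_{k1})\right)/a_{k1}$, not the mis-paired equation you display (which would instead yield $U'(s)/s=U'(u)/u$ and hence a convex quadratic $U$); it is the corrected form that produces your stated and correct functional equation $s\,U'(s)=u\,U'(u)$.
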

\begin{proof}
Take a single resource and two jobs with resource requirements $a_{11}=\frac 1 2$  and $a_{21}=\frac 1 {2x}$ for some $x>0$. In view of \eqref{eq:kktu}, we have $U'(\varphi_1)={x}U'(\varphi_2)$.
If the allocation is locally fair, we have equal resource shares $a_{11}\varphi_1=a_{21}\varphi_2=\frac{1}{ 2}$ so that:
$$
\forall x>0,\quad U'(x)=\frac{U'(1)}{ x}.
$$
This implies that $U=\alpha \log(\cdot)+\beta$ for some constants $\alpha>0$ and $\beta$, which is PF.
\end{proof}

\section{Behaviour in dynamic traffic}
\label{sec:dynamicshare}
\label{sec:fluidmodel}

Jobs handled by a data center cluster arrive over time and remain active for a finite duration that depends on the resource shares they are allocated. Algorithm performance, perceived in terms of job completion times, can only be realistically appraised under such dynamic traffic. We adopt a simple Markovian traffic model and retain the assumption of infinitely divisible resources to illustrate the relative performance of DRF and PF.  
\subsection{Traffic model}
\label{sec:trafficmodel}

We consider the following dynamic traffic model. There are $K$ distinct classes of jobs. Jobs of class $k$ arrive at rate $\lambda_k$ and require the completion of $\sigma_k$ tasks on average. 
We denote by $n_k$ the number of class-$k$ jobs in progress and by $n=n_1+\ldots+n_K$ the total number of jobs. 
Each task of a class-$k$  job  requires $a_{kj}$  units of resource $j$. The capacity constraints are:
\begin{equation}\label{eq:capa2}
 \sumk n_k\varphi_k a_{kj}\le 1,
\end{equation}
for $j=1,\ldots,J$.
Let $\tau_k$ be the mean service time of one task of a class-$k$ job. Then 
$\mu_k=1/(\sigma_k\tau_k)$ is the completion rate of a  class-$k$ job with 
exactly one ongoing task and  $\varphi_k\mu_k$ is the  completion rate of a class-$k$ job. 

Under the fluid sharing model with exponential assumptions, i.e., Poisson job arrivals and exponentially distributed job sizes, the vector $\vec n=(n_1,\ldots,n_K)$ is a Markov process with component-$k$ birth rate $\lambda_k$ and death rate $n_k\varphi_k\mu_k$, for $k=1,\ldots,K$. 

Markovian assumptions are adopted for the sake of simplicity. We note however that insensitivity results for bandwidth sharing in networks suggest the performance of this simple model is indicative of much more general and realistic traffic models (e.g., see \cite{Bonald2006}). The impact of non-infinitely divisible resources is evaluated in the next section.

We assume allocations are instantaneously updated every time a new job starts or a job in progress completes. 
For DRF the allocation is determined by applying the water-filling algorithm. 
For PF we can determine the vector of Lagrange multipliers $\nu$ using the gradient descent method: 
\begin{equation}
\nu_j\gets \left(\nu_j+\gamma\left(\sumi  \psi_i{a_{ij}}-1\right)\right)^+, 
\label{eq:lagrange}
\end{equation}
where $\gamma>0$ is a sufficiently small step size and, for $i=1,\ldots,n$, $\psi_i= 1/ \sumj a_{ij}\nu_j$.

\subsection{Traffic capacity}

Tasks of  class-$k$ jobs arrive at rate $\lambda_k\sigma_k$ and last $\tau_k$ seconds on average. The corresponding traffic demand is thus $\rho_k=\lambda_k/\mu_k$. In dynamic traffic, the considered system may be unstable if demand is too high in the sense that the number of jobs in progress may then grow indefinitely. The set of loads for which the system remains stable defines a capacity region. The capacity region of DRF and PF (and alternative utility-based fairness allocations) can be directly deduced from known results for bandwidth sharing in networks \cite{Ye2003}. It is defined by the following inequalities:
\begin{equation}
\sumk \rho_k a_{kj}<1,
\label{eq:capacity}
\end{equation}
for $j=1,\ldots,J$. In other words, the load on each resource must be less than its normalized capacity.

\subsection{Performance}
\label{sec:fluidperf}

Users experience system performance through the time it takes to complete their job. As this is job size dependent, we prefer to present class-$k$ performance in terms of the mean service rate $\gamma_k$ defined as the ratio of the mean job size to the mean completion time.  By Little's law, we have the following expression:
$$
\gamma_k=\frac{\lambda_k}{ {\rm E}(n_k)}.
$$

We have evaluated the mean service rates $\gamma_k$ for DRF and PF allocations by simulating some simple system configurations. 
We suppose each job requires quantities of two resources, CPU and RAM say, with relative per-task requirements $a_1=( 1, 1/10) $ and $a_2=( 1/10, 1) $.  We set $\mu_1=\mu_2=1$ and consider two demand patterns: \emph{balanced} with  $\lambda_1=\lambda_2 = \lambda/2$, where $\lambda$ is the total demand, and \emph{unbalanced} with $\lambda_1=\frac 3 4\lambda$ and $\lambda_2=\frac 1 4\lambda$. For balanced demand the load of both resources is $\frac \lambda 2(1+\frac 1{10})$ whereas the unbalanced case yields a CPU load of $\frac \lambda 4( 3  + \frac 1 {10})$ and a RAM load of $\frac \lambda 4(  \frac 3 {10}+1 )$, nearly 3 times smaller.

\begin{figure}[h]
\begin{center}
\subfigure[balanced load ($\rho_1= \rho_2$)]{\includegraphics[width=7cm]{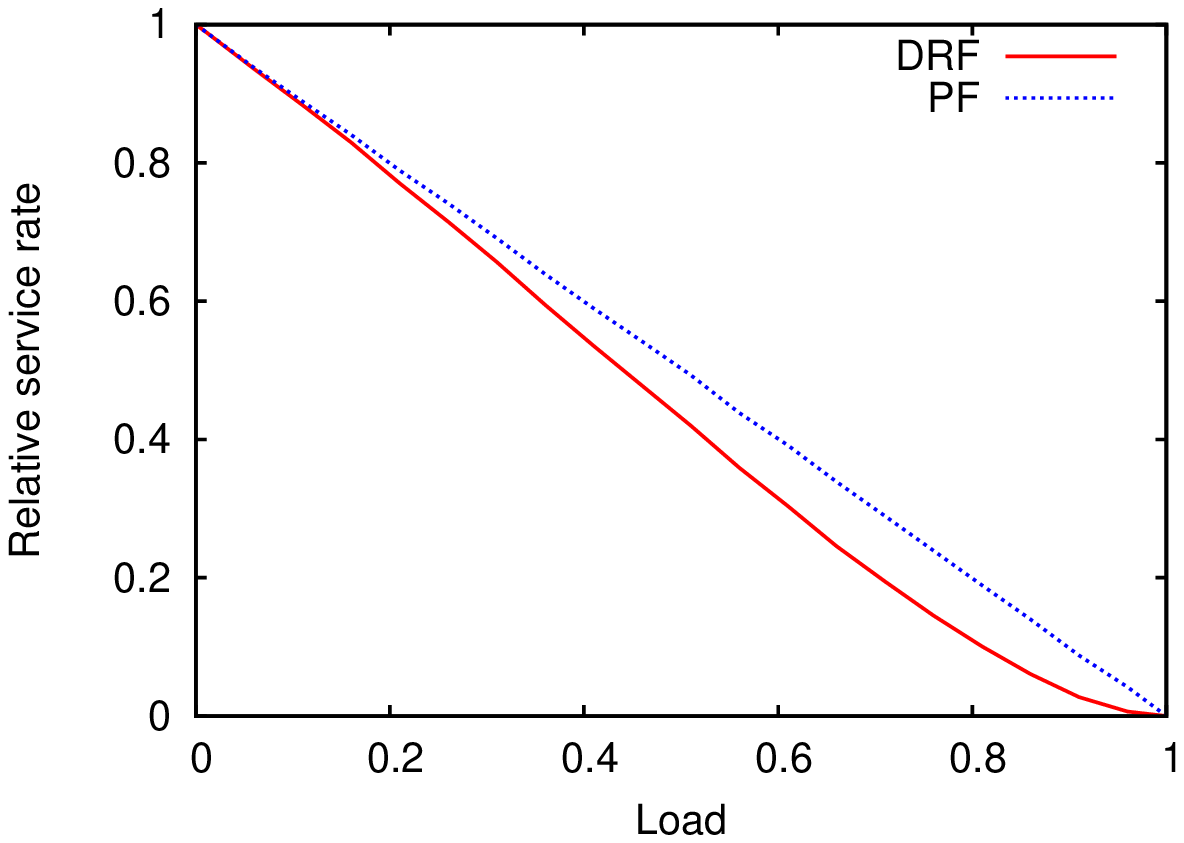}\label{fig:exp-a}}
\subfigure[unbalanced load ($\rho_1=3 \rho_2$)]{\includegraphics[width=7cm]{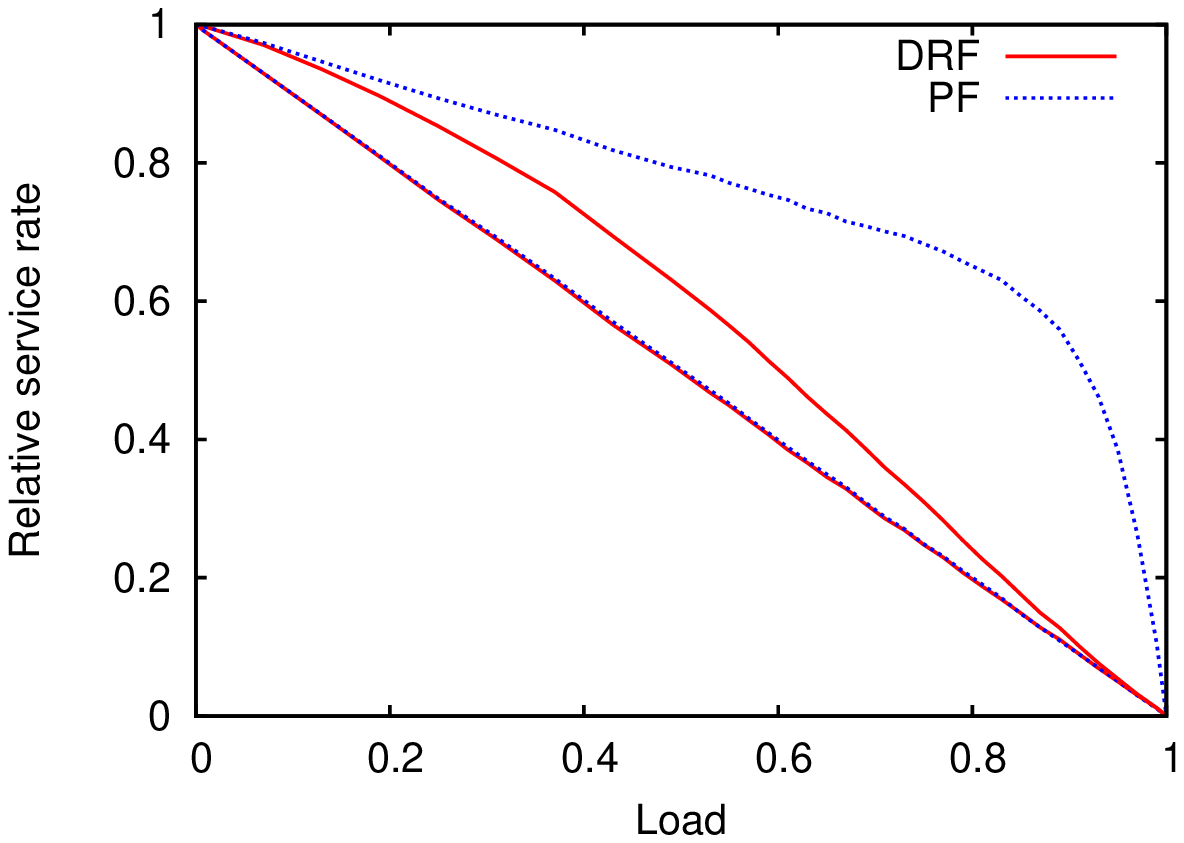}}
\caption{\label{fig:exp}Service rate against load of resource 1 (CPU).}
\end{center}
\end{figure}

Figure \ref{fig:exp} compares the mean relative service rate provided by DRF and PF allocations as a function of CPU load. 
Note that the service rate is 1 (in normalized units) at load zero, when never more than 1 job is in progress, and tends to zero as load attains system capacity. For a single resource, the service rate would in fact be $1 - \varrho$ for both DRF and PF, where $\varrho$ is the resource load. This is so because allocations are locally fair and therefore realize a processor sharing system. 

Both graphs in Figure \ref{fig:exp} present 4 curves representing $\gamma_k$ for $k=1,2$ for DRF and PF.  
In the left hand plot the curves for both classes for each fairness objective coincide by symmetry while PF outperforms DRF, especially at high load. The right hand plot for unbalanced demand shows that $\gamma_2^{(PF)}$ is significantly higher than $\gamma_2^{(DRF)}$ while the service rates $\gamma_1$ for the CPU constrained class are practically the same. This occurs basically because of the bias exhibited by DRF but not PF when resource 2 (RAM) is not saturated, as noted in Remark \ref{rem:bias}.
 
\begin{figure}[h]
\begin{center}
\subfigure[balanced load ($\rho_1=\rho_2$)]{\includegraphics[width=7cm]{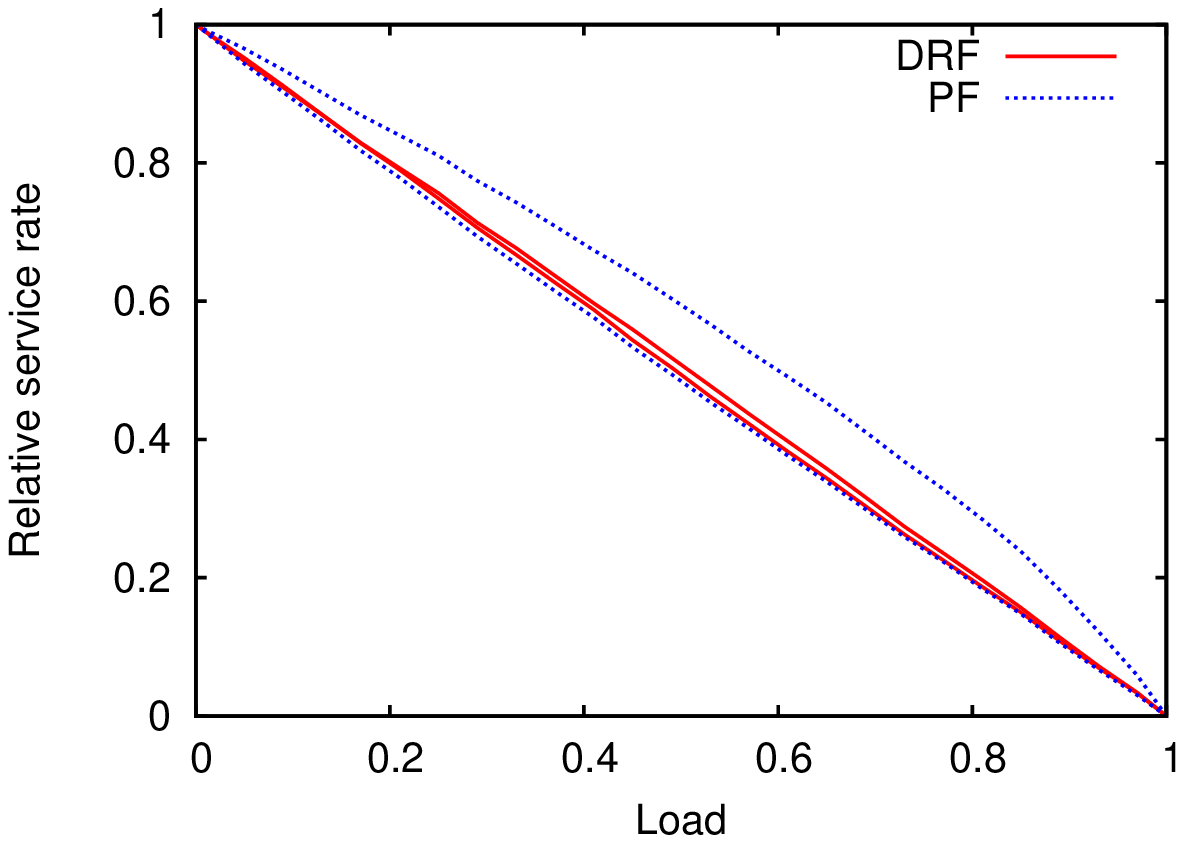}}
\subfigure[unbalanced load ($\rho_1=3\rho_2$)]{\includegraphics[width=7cm]{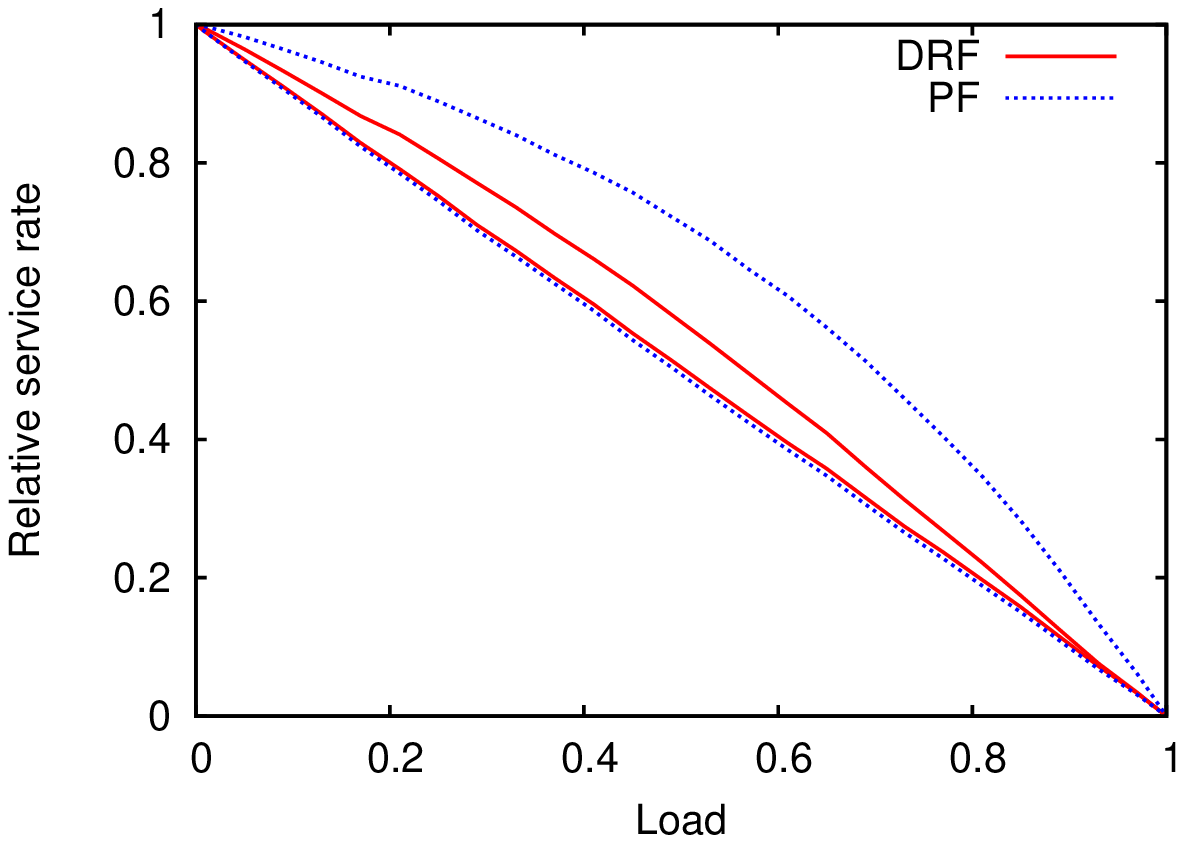}}
\caption{\label{fig:exp2}Service rate against load of resource 1 (CPU).}
\end{center}
\end{figure}

Figure \ref{fig:exp2} presents a case where PF also performs better than DRF at low load. To be precise, the service rate of class 2 is significantly greater at the cost of a slight decrease of that of class 1. The class profiles for this case are  $a_1=(1,1/10)$ and $a_2=(1/2, 1)$.

The above simple examples clearly do not prove that PF is always better than DRF. To gain insight first consider the light traffic behaviour of a general 2-class, 2-resource type system. Suppose, without loss of generality in view of scale invariance, job classes have task profiles  $a_1=(1,\alpha)$ and $a_2=(\beta,1)$ with $\alpha \le \beta \le 1$. 
Considering the evolution of the job population as $\rho_1 \to 0$ and $\rho_2 \to 0$, we derive the following second order approximations for DRF and PF: 
$$\gamma_1^{(\textsc{DRF})} \approx  1 - \rho_1 - \beta \rho_2 \quad  \mathrm{ and }  \quad \gamma_2^{(\textsc{DRF})} \approx  1 -\beta \rho_1 - \rho_2.$$  
$$\gamma_1^{(\textsc{PF})} \approx  1 - \rho_1 -  \beta \rho_2 \quad \mathrm{ and } \quad  \gamma_2^{(\textsc{PF})} \approx  1 - \max \left(2-1/\beta, \alpha\right) \rho_1 - \rho_2.$$
We see that $\gamma_1$ has the same light traffic behaviour for both allocations while $\gamma_2$ decreases more rapidly for DRF than for PF since $\max \left(2-1/\beta, \alpha\right) \le \beta$.

Consider now the same profiles in heavy traffic with $\varrho = \rho_1 + \beta \rho_2  > \alpha \rho_1 + \rho_2$, so that resource 1 has higher load than resource 2, and $\varrho \to 1$. PF may then be shown to yield $\gamma_1^{(\textsc{PF})} \approx 1-\varrho$ and $\gamma_2^{(\textsc{PF})} \approx (1-\varrho)/\alpha$. The water filling definition of DRF, on the other hand, will equalize the service rates of both classes $\gamma_1^{(\textsc{DRF})} \approx \gamma_2^{(\textsc{DRF})} \approx 1-\varrho$. The latter approximation derives from the underlying ``processor sharing'' system emulated by DRF in heavy, unbalanced traffic. Note that the behaviour of DRF under balanced load with symmetric profiles is rather worse than this in heavy traffic (cf. Fig. \ref{fig:exp-a}). This is because the allocation is constrained by the resource with the heaviest instantaneous load, $\arg \max_j \left( n_1 a_{1j} + n_2 a_{2j}\right)$. 

\begin{figure}[h]
\begin{center}
\subfigure[balanced load ($\rho_1=\rho_2=\rho_3=\rho_4$)]{\includegraphics[width=7cm]{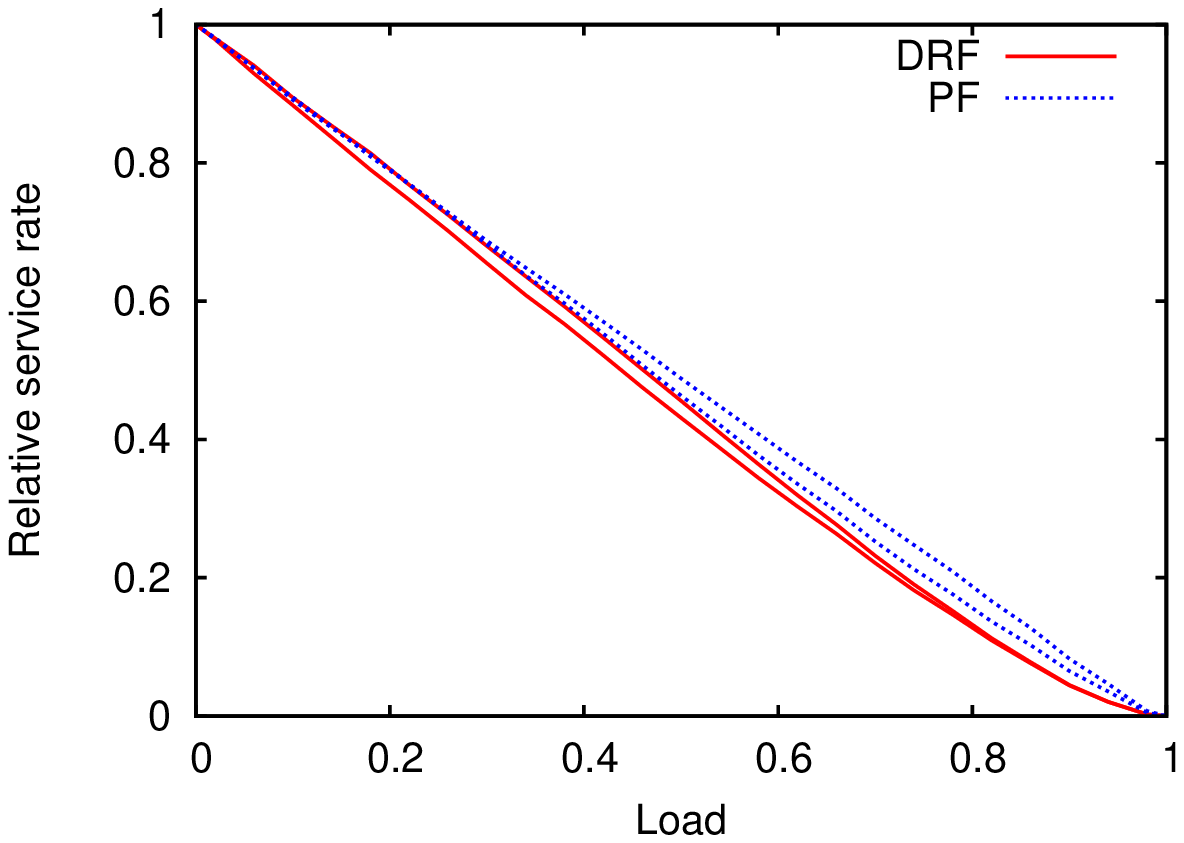}}
\subfigure[unbalanced load ($\rho_1=\rho_2=3\rho_3=3\rho_4$)]{\includegraphics[width=7cm]{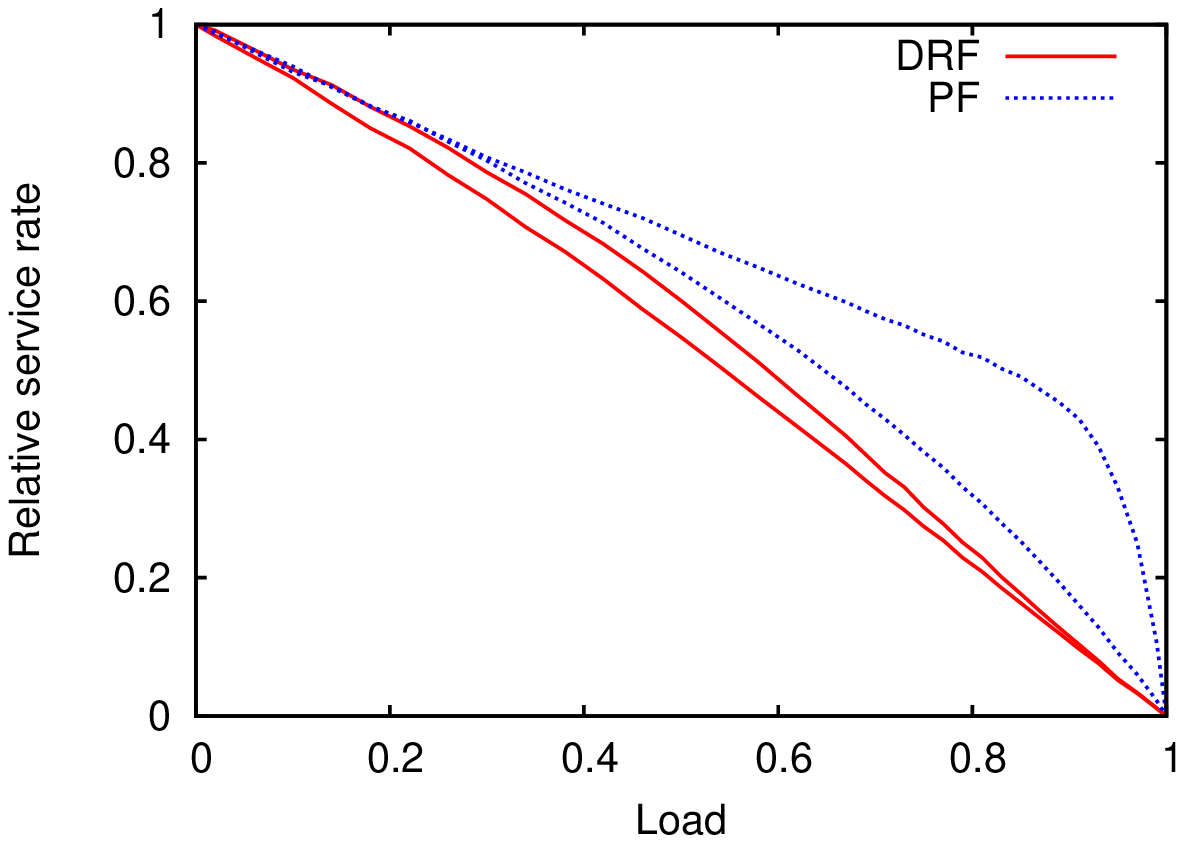}}
\caption{\label{fig:exp5}Service rate against load of resource 1 (CPU).}
\end{center}
\end{figure}

In Figure \ref{fig:exp5}, we plot results for 4 job classes with profiles $a_1= (1,1/10)$, $a_2= (1,1/2)$, $a_3= (1/2,1)$ and $a_4= (1/10,1)$. The figures plot the service rates of classes 3 and 4 whose dominant resource is resource 2 (RAM). These results confirm the trends shown in previous cases are maintained when the traffic mix is more varied.

 We have simulated several other configurations including cases with three types of resource. The results confirm the trends illustrated in the above plots: PF slightly outperforms DRF for balanced traffic; the gain is more significant for jobs whose dominant resource has relatively lower load; the gain comes at the cost of an imperceptible loss of service rate for jobs whose dominant resource is most heavily loaded. The latter is approximately $1-\varrho$ for both DRF and PF where $\varrho$ is the resource load, as predicted by the underlying processor sharing model, except for DRF under balanced load. DRF tends to behave poorly in heavy, balanced traffic since it penalizes efficiency in favour of strict fairness.
 
\subsection{Strategy-proofness}
The strategy-proof property in Section \ref{sec:properties} relates to a static population of jobs. We demonstrate in this subsection that the absence of this property for PF is \emph{not a disadvantage} in dynamic traffic. Even if some job knows some statistics about  the  resource requirements of competing jobs, it does not know how many jobs are active and thus can hardly define a consistent strategy to improve its service rate by lying about its resource requirements. 

Consider the following example. There are $J=2$ resources and $K=2$ classes with 
resource requirements $a_1=(1,\alpha)$ and $a_2=(\alpha,1)$  for some $\alpha\in (0,1)$. There is a single, permanent class-1 job. 
The only way for this job  to increase its service rate is to claim $\beta\in (\alpha,1)$ 
  units of resource 2 per task. 
Under PF,  its service rate  in the presence of $n$ other jobs is:
 $$\varphi_1(n)=\min\left(\frac{1}{ n} \frac{1-\alpha }{1-\alpha\beta},\frac 1 {\beta(n+1)}\right),$$
 which is maximum for:
 $$
 \beta=\frac 1{1+n(1-\alpha)},
 $$
 Note that this value of $\beta$ lies in $(\alpha,1)$ as long as $n\le 1/\alpha$.
 Each  class-2 job gets $\varphi_2(n)=\frac{1}{ n}(1-\beta\varphi_1(n))$.
The number of class-2 jobs forms a birth-death process with birth rate $\lambda_2$ and death rate $\mu_2 n\varphi_2(n)$ in state $n$. This process is stable 
whenever $\rho_2<1$ and then has a certain stationary distribution $\pi$ yielding the mean service rate:
$$
\gamma_1^{(PF)}=\sum_n \pi(n) \varphi_1(n).
$$
Figure \ref{fig:1flow} plots $\gamma_1^{(PF)}$ against $\rho_2$ for $\alpha=\frac 1 {10}$ and three values of $\beta$, chosen to maximize the service rate $\varphi_1(n)$ assuming $n=1,2$ and 5, respectively. We observe that no benefit is gained from any of these static optimizations demonstrating that PF is indeed strategy proof in this scenario.

\begin{figure}[h]
\begin{center}
\includegraphics[width=7.5cm]{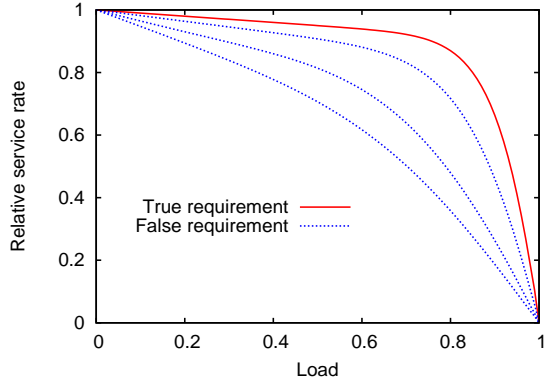}
\caption{\label{fig:1flow}Mean service rate of the test flow with true and false requirements that are optimal for 1, 2 and 5 other flows (from bottom to top).}
\end{center}
\end{figure}

If, in an alternative scenario, the requirements of all jobs in a given class $k$ were falsely declared with $a'_{kj}\ge a_{kj}$, for $j=1,\ldots,J$, and strict inequality for at least one resource, it is clear from (\ref{eq:capacity}) that the capacity region would be smaller for both DRF and PF. Simulation results shown in Figure \ref{fig:strategy2} confirm that service rates at stable loads are reduced for both DRF and PF. The figure plots the class 1 service rate for the configuration of Figure \ref{fig:1flow} with balanced load. The dashed lines result from the same values of $\beta$ used in Figure \ref{fig:1flow} confirming there is no winning strategy. Traffic capacity is limited by $(1+\alpha)/(1+\beta)<1$. Of course, class 2 jobs also suffer from the false claims of class 1 in all these cases.

\begin{figure}[h]
\begin{center}
\subfigure[DRF]{\includegraphics[width=7cm]{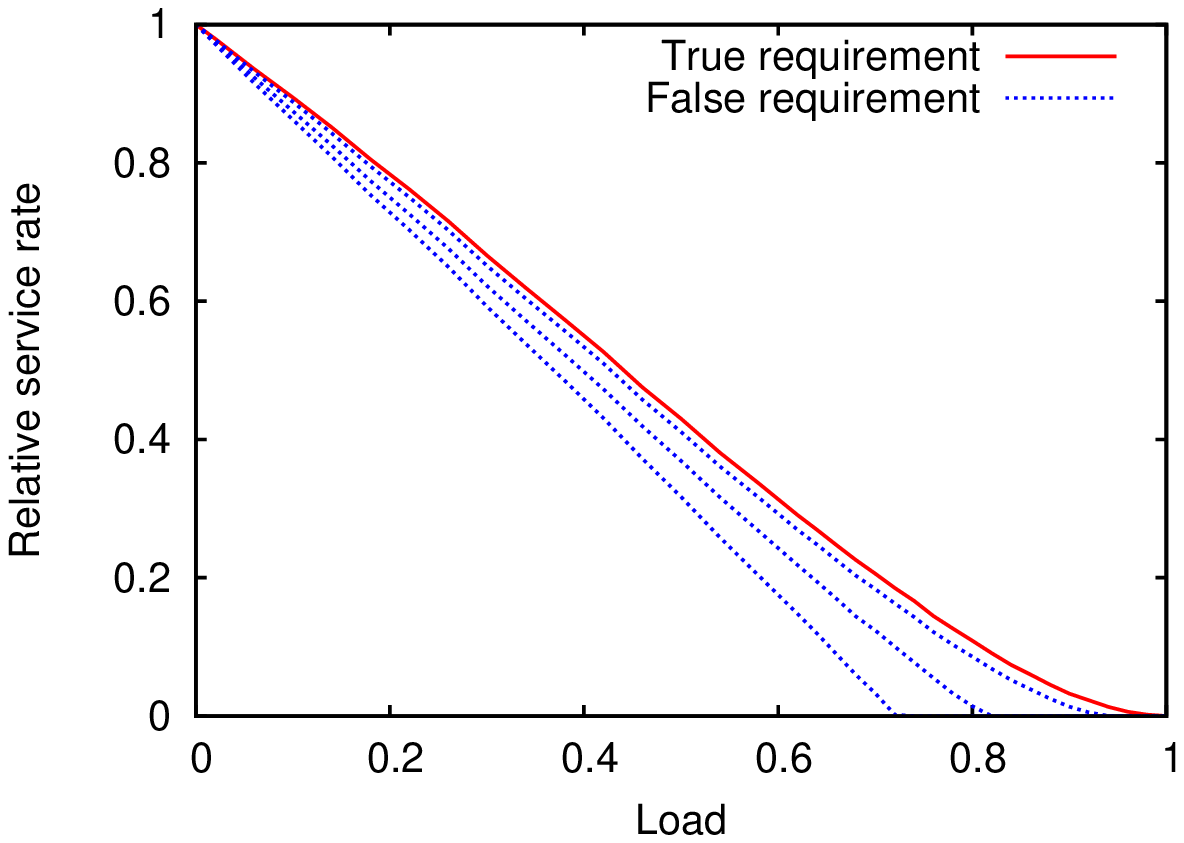}}
\subfigure[PF]{\includegraphics[width=7cm]{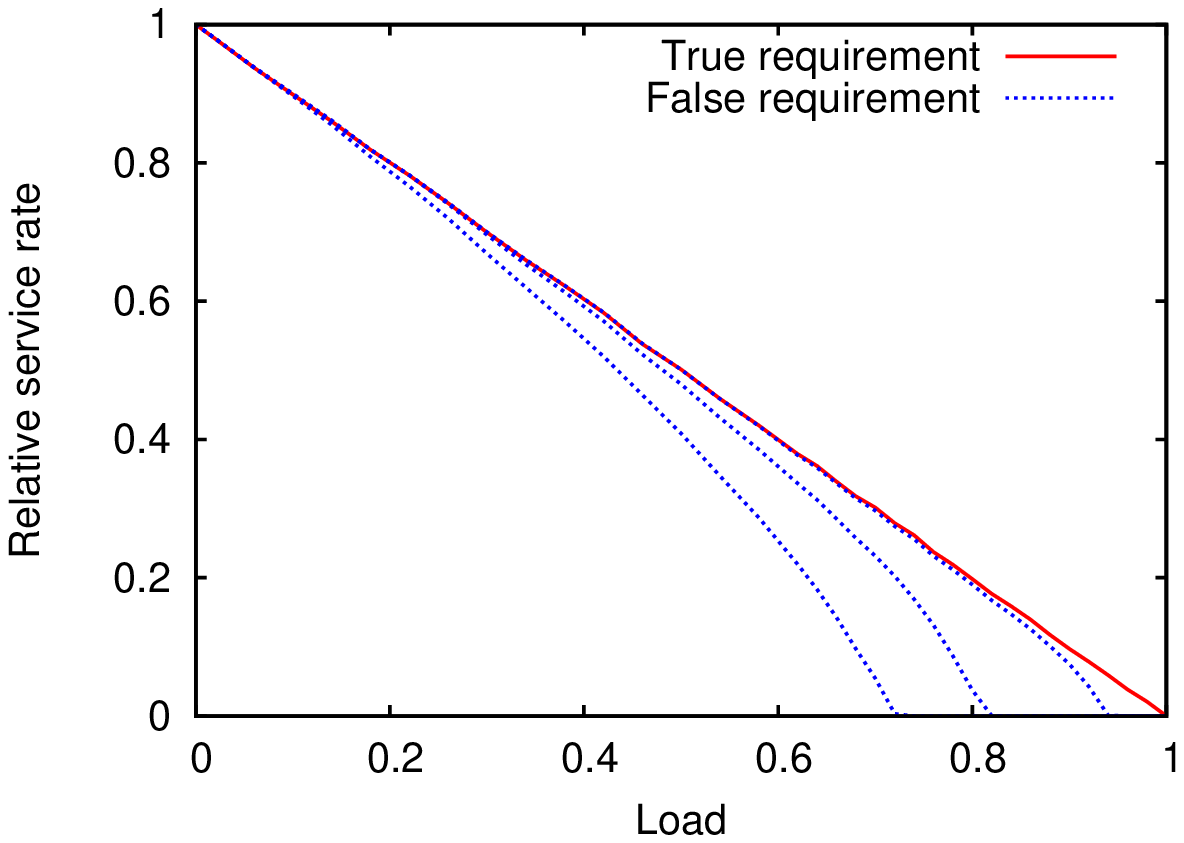}}
\caption{\label{fig:strategy2}Mean service rate of class 1 flows against CPU on for false class 1 requirements as in Fig. \ref{fig:1flow}}
\end{center}
\end{figure}

\section{Task assignment algorithms}
\label{sec:taskassign}
While the fluid models analyzed above are useful in appraising the relative merits of alternative fairness objectives, it is necessary in practice to account for the fact that tasks are not infinitely divisible. In this section we relax this assumption. We discuss the implementation of task-based versions of DRF and PF and compare their performance in simulation experiments.

\subsection{``Serve the most deprived job''}
We follow the proposal in \cite{Ghodsi2011} in seeking to realize the fairness objective by preferentially launching tasks of the most deprived job with respect to the objective fairness criterion. Let $\varphi_i$ now denote the integer number of tasks in progress for job $i$. For DRF,  the most deprived job is that for which $\varphi_i \max_j\{ a_{ij}\}$ is minimal. For PF, the job to serve is that for which $\varphi_i \sumj a_{ij}\nu_j $ is minimal, where the $\nu_j$ are Lagrange multipliers computed iteratively using (\ref{eq:lagrange}). 

Tasks are assigned when jobs arrive, if resources are available, or when other tasks end. If available resources are insufficient to accommodate a task of the most deprived job, allocations are frozen until a sufficient number of other tasks end or some other job takes over the most-deprived status. Note that we continue to assume resources are pooled though the assignment algorithm could be adapted to account for additional practical constraints such as assigning CPU and RAM on the same physical server, say,  \cite{Psomas2013}, or meeting other compatibility constraints \cite{Ghodsi2013}.

\subsection{Performance}
We have simulated task-based allocations in the following configuration. Two classes of job require different amounts of CPU and RAM in the following proportions: $a_1=(1,1/10)$ and $a_2=(1/10,1)$, as previously considered in Section \ref{sec:fluidperf}. Jobs of both classes require the execution of 500 tasks while CPU and RAM both have capacity 100. This means a single job of either class in an otherwise empty system will have 100 tasks in progress until 400 are completed and then 99, 98 and so on, until the last task finishes. This is a significant discrepancy with respect to the fluid model where the service rate would remain constant at 100 until job completion. With respect to previous notation, we have $\sigma_k=500$, $\tau_k=1/5$, $C=100$ and therefore $\mu_k=C/(\sigma_k \tau_k)=1$, for $k=1,2$.

\begin{figure}[h]
\begin{center}
\subfigure[Exponential task size]{\includegraphics[width=7cm]{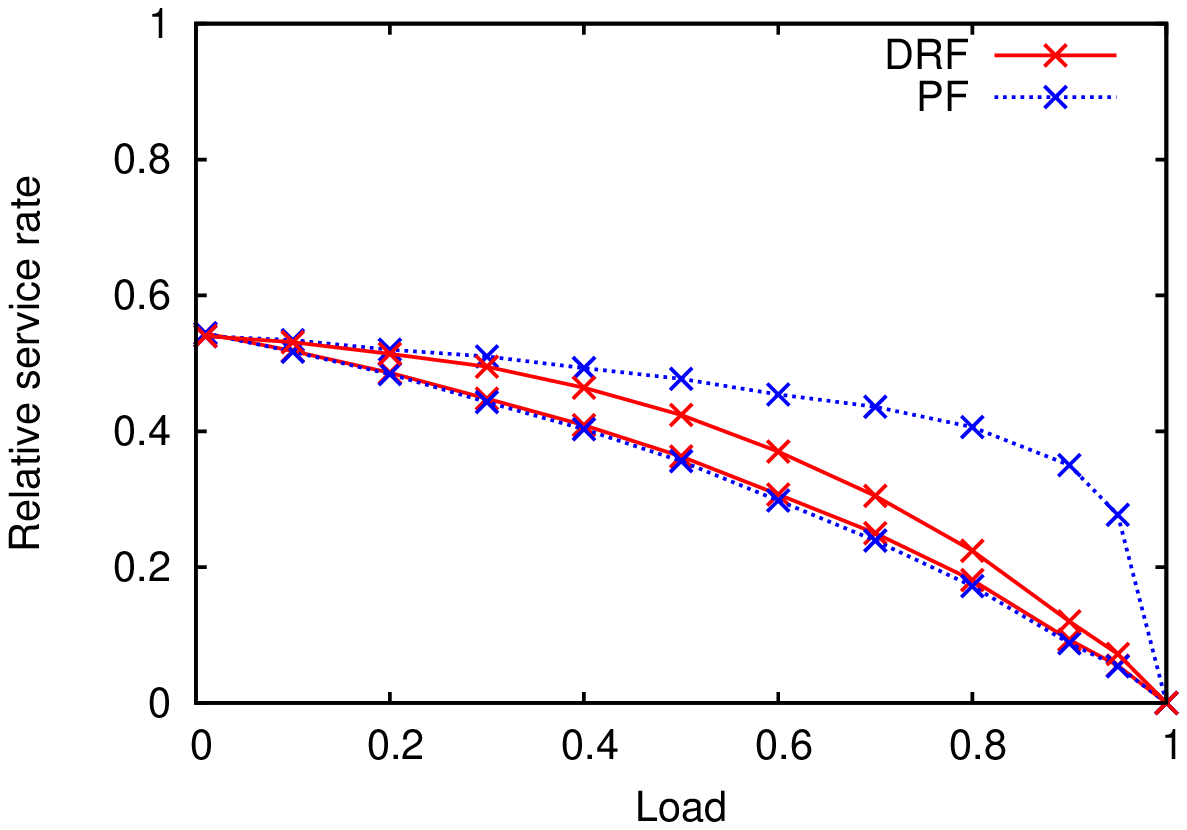}}
\subfigure[Erlang-20 task size]{\includegraphics[width=7cm]{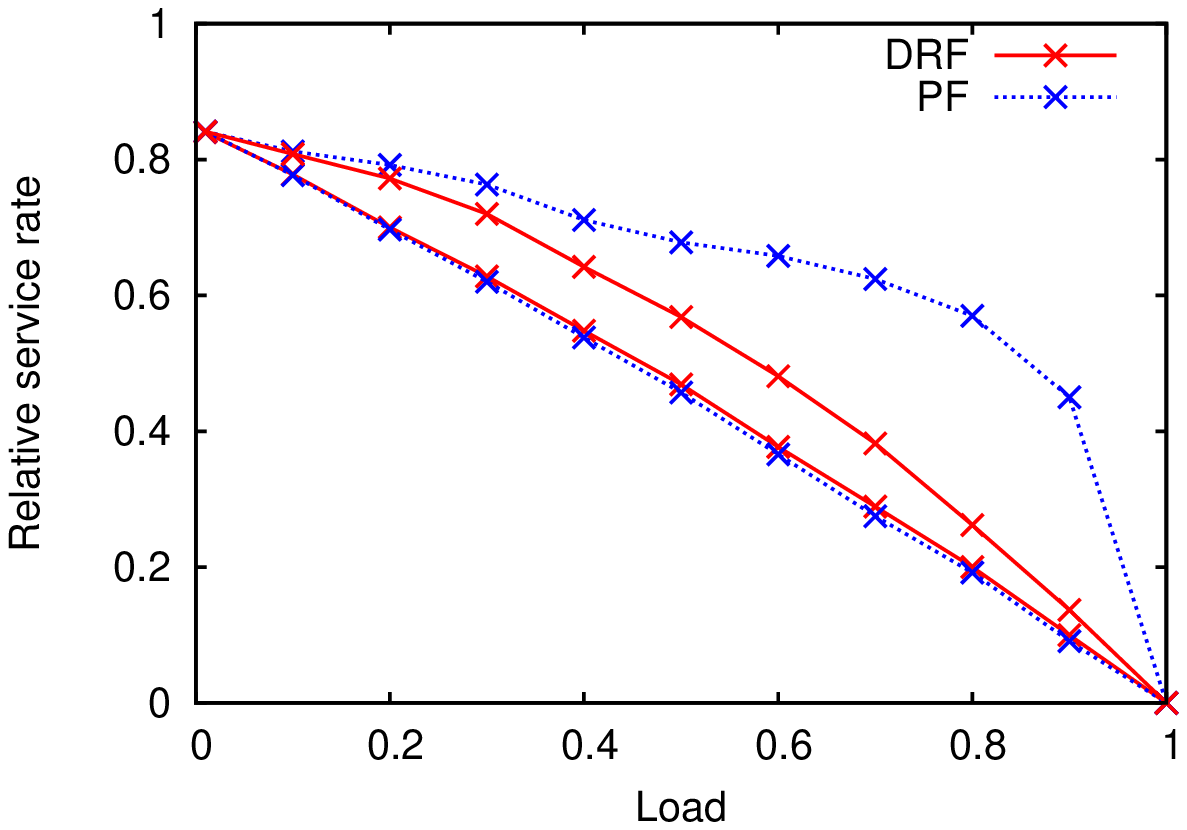}}
\caption{\label{fig:taskexp}Service rate against load of resource 1 (CPU) for unbalanced trafic ($\rho_1=3\rho_2$).}
\end{center}
\end{figure}

Figure \ref{fig:taskexp} shows the simulation results for DRF and PF allocations under unbalanced load, $\rho_1=3\rho_2$. The left hand plots relate to exponentially distributed task sizes. Compared to the right hand plots of Figure \ref{fig:exp} there is a significant reduction in service rate for low loads due to the discrepancy noted above. The service time of an isolated job in the fluid model is the sum of 500 task inter-departure intervals at rate 100 compared, in the task-based system, to 401 intervals at rate 100 plus one at rate 99, one at rate 98, \ldots, and 1 at rate 1. The ratio of service times is thus 
$$\frac{500}{100} /  \left( \frac{401}{100}+\frac{1}{99}+\frac{1}{98}+\ldots+1 \right) \approx 0.54,$$
corresponding to the zero load service rate seen in the figure. 

The discrepancy is smaller for task service times that are more deterministic. The right hand plot relates to an Erlang-20 distribution (i.e., the sum of 20 exponential random variables) with the same mean yielding service rates much closer to those of the fluid model. The fluid model is also a good fit when the number of tasks per job is much bigger than the resource capacity (results not shown).

Despite the discrepancy, it is clear that the relative advantage of PF over DRF, observed and analyzed in Section \ref{sec:fluidmodel},  is preserved under the task-based model. PF realizes a more favourable efficiency-fairness tradeoff, especially when demand for some resource is relatively low.

\section{Related work}
\label{sec:related}
A number of authors have extended the DRF concept introduced by Ghodsi \etal \cite{Ghodsi2011}.  Parkes \etal \cite{Parkes2012} generalize DRF notably to account for per-job sharing weights. Psomas and Schwartz \cite{Psomas2013}  take account of the fact that task resources must fit into a single machine, bringing bin packing considerations to the allocation problem. Wang \etal \cite{Wang2013} further generalize this approach by accounting for heterogeneous server capacity limits. Bhattacharya \etal \cite{Bhattacharya2013} add the notion of hierarchical scheduling to DRF: cluster resource allocations account also for shares attributed to the departments that own particular jobs. It is noteworthy that DRF is now implemented in the Hadoop Next Generation Fair Scheduler\footnote{http://hadoop.apache.org/docs/r2.3.0/hadoop-yarn/hadoop-yarn-site/FairScheduler.html}. Lastly, we note the proposal to use DRF in a different application setting, namely sharing middlebox resources in software routers \cite{Ghodsi2012, Wang2013b}.

An alternative multi-resource fairness criterion is proposed by Dolev \etal \cite{Dolev2012}. The authors introduce the notion of \emph{entitlement} corresponding to the share of each resource to which a job should minimally have access. The objective is then to define an allocation such that users do not receive less than their entitlement on at least one bottlenecked resource and therefore have ``no justified complaints''. The resulting allocation, named  bottleneck-based fairness (BBF), has been further developed by  Zeldes and Feitelson \cite{Zeldes2013} who propose an on-line algorithm. The multi-resource fairness problem is defined more formally by Gutman and Nisan who generalize the DRF and BBF definitions somewhat  \cite{Gutman2012}. These authors identify similarities with prior work on fairness in the economics literature and propose polynomial complexity algorithms drawn from that source. Joe-Wong \etal  \cite{Wong2012} propose alternative generalized definitions of what constitutes a fair multi-resource allocation identifying notions of ``Fairness on dominant shares'' (FDS) and ``Generalized fairness on jobs'' (GFJ).  

All the work cited in the above two paragraphs considers sharing between a fixed population of jobs and ignores the impact of traffic dynamics due to arrivals over time of finite size jobs. They do not therefore evaluate the performance of allocations in terms of realized job completion times which is arguably the most relevant criterion in appraising their effectiveness.

There is an abundant literature on sharing network bandwidth. PF and, more generally, the idea that bandwidth sharing between a fixed set of flows should maximize a sum of per-flow utilities were notions introduced by Kelly \etal in 1998 \cite{Kelly1998}. In that work, and subsequent contributions by many authors on network utility maximization, the population of flows in progress is assumed fixed. We believe this to be inappropriate, like the assumption that the population of jobs is fixed when evaluating cluster sharing.  It does not make much sense to measure the utility of a flow as a function of its rate (e.g., the log function for PF) on recognizing that the population of flows in progress, and consequently the rate of any ongoing flow, changes rapidly and significantly throughout the lifetime of that flow. 

The study of bandwidth sharing under a dynamic model of traffic also began at the end of the 90s \cite{Massoulie2000}. The most significant results derived over an ensuing period of several years are summarized in the paper by Bonald \etal \cite{Bonald2006}. It turns out that PF brings significantly better flow completion time performance than max-min fairness for a network with a mix of wired and wireless links (e.g., see Fig. 17.14 in \cite{Bonald2011}). Flows in such networks require unequal resource shares, like jobs in the compute cluster, so that these bandwidth sharing results anticipate the excellent performance of PF demonstrated here.

\section{Conclusions}
\label{sec:conclude}

Sharing multiple compute cluster resources like CPU and RAM is in many ways analogous to sharing bandwidth in a network where each link is considered as a distinct resource. The analogy is closest with respect to a wireless network where flows have different resource requirements on wired and radio links, respectively. We know proportional fairness (PF) performs well in such networks and have therefore sought to evaluate its suitability for sharing cluster resources. Our results confirm that PF is indeed preferable to so-called dominant resource fairness (DRF), the sharing criterion currently applied by the Next Generation Hadoop Fair Scheduler, for example.  

DRF was developed to realize a set of sharing properties derived essentially from micro-economics considerations. In particular, DRF is claimed to be strategy proof in the sense that a job cannot gain better service by announcing false requirements. We have argued here that these properties must be revisited to account for the dynamics over time of the job population. In this context, an algorithm is strategy proof if the expected completion time of a job cannot be reduced by false announcements, given that the user is necessarily unaware of the precise mix of competing jobs. 

In the dynamic setting, PF is strategy proof to the same degree as DRF: it is not possible to improve an allocation by declaring false requirements. PF realizes a better fairness--efficiency tradeoff than DRF over the range of possible loads. This is especially true when resource load is unbalanced with some resources remaining underloaded as others become saturated. DRF is then particularly deficient in overly limiting use of the underloaded resource in order to satisfy its particular notion of fairness. 

PF can be realized with the same, low complexity as DRF by adapting its ``serve-the-most-deprived-job'' policy. When resources are freed as tasks complete, they are assigned to new tasks of the job whose current allocation is furthest from the PF ideal. This ideal allocation is easily determined as the result of a simple constrained optimization problem.

It remains to adapt the PF allocation to account for practical placement constraints that occur in data centers. These include the requirement to allocate CPU and RAM on the same server, for example, or to use a specific type of server for some jobs. It is also of interest to incorporate hierarchical scheduling criteria whereby the departments originating certain classes of jobs have overall service rate guarantees. Available trace data suggests cluster resources are currently significantly over-provisioned so that optimized sharing is hardly an issue in practice. In future, however, one would expect tighter energy management leading to a requirement for more closely controlled performance. The key to this would be analytical modelling enabling one to predict performance given system capacity and load, by applying the robust engineering techniques developed over recent years for networks, for example.

\bibliographystyle{alpha}
\bibliography{multifair} 

\end{document}